\newcommand \Dotfill {\leavevmode \cleaders \hb@xt@ .33em{\hss .\hss }\hfill \kern \z@ $\,$}
\theoremstyle{plain}
\theoremstyle{definition}
\newtheorem*{defn*}{Definition}
\newcommand{\F}{{\mathcal F}}
\newcommand{\Fk}{{\mathcal F}_\kappa}
\newcommand{\Fkh}{{\mathcal F}_{\hat{K}}}
\newcommand{\unitcube}{{\ensuremath{[0,1]^p}}\xspace}
\newcommand{\E}{{\mathcal E}}
\newcommand{\mmmUncertainty}{\ensuremath{{\mathcal E}_{\hat{K}}}}
\newcommand{\cornerA}{{\mathbf 0}}
\newcommand{\cornerB}{{\mathbf 1}}
\newcommand{\mmfk}{{\hat{f}_{\kappa}}}
\newcommand{\mmfK}{{\hat{f}_K}}
\newcommand{\mmfKhat}{{\hat{f}_{\hat{K}}}}
\newcommand{\Lip}{{\mbox{Lip}}}
\newcommand{\M}{\ensuremath{M_\epsilon}\xspace}
\DeclareMathOperator*{\argmin}{arg\,min}
\DeclareMathOperator*{\dom}{dom}
\title{Mini-Minimax Uncertainty Quantification for Emulators}
\author{Jeffrey C.~Regier\footnotemark[2]
\and Philip B.~Stark\footnotemark[2]}
\begin{document}

\maketitle
\newcommand{\slugmaster}{%
}

\renewcommand{\thefootnote}{\fnsymbol{footnote}}
\footnotetext[2]{Department of Statistics, University of California, Berkeley, CA 94720
(\email{jeff@stat.berkeley.edu}, \email{stark@stat.berkeley.edu})}
\renewcommand{\thefootnote}{\arabic{footnote}}

\pagestyle{myheadings}
\thispagestyle{plain}
\markboth{J.~C. Regier and P.~B. Stark}{Uncertainty Quantification for Emulators}

\begin{abstract}
Consider approximating a ``black box''
function $f$ by an \emph{emulator} $\hat{f}$
based on $n$ noiseless observations of $f$.
Let $w$ be a point in the domain of $f$.
How big might the error $|\hat{f}(w) - f(w)|$ be?
If $f$ could be arbitrarily rough, this error could be arbitrarily
large: we need some constraint on $f$ besides the data.
Suppose $f$ is Lipschitz with known constant.
We find a lower bound on the number of observations required to
ensure that for the best emulator $\hat{f}$ based on the $n$ data,
$|\hat{f}(w) - f(w)| \le \epsilon$.
But in general, we will not know whether $f$ is Lipschitz, much less know its Lipschitz constant.
Assume optimistically that $f$ is Lipschitz-continuous with the smallest
constant consistent with the $n$ data.
We find the maximum (over such regular $f$) of $|\hat{f}(w) - f(w)|$ for the
best possible emulator $\hat{f}$;
we call this the ``mini-minimax uncertainty'' at $w$.
In reality, $f$ might not be Lipschitz or---if it is---it might not attain its Lipschitz constant on the data.
Hence, the mini-minimax uncertainty at $w$ could be much smaller than $|\hat{f}(w) - f(w)|$.
But if the mini-minimax uncertainty is large, then---even if $f$ satisfies the optimistic
regularity assumption---$|\hat{f}(w) - f(w)|$ could be large,
no matter how cleverly we choose $\hat{f}$.
For the Community Atmosphere Model, the maximum (over $w$)
of the mini-minimax uncertainty based on a set of 1154~observations
of $f$ is no smaller than it would be for a single observation
of $f$ at the centroid of the 21-dimensional parameter space.
We also find lower confidence bounds for quantiles of the mini-minimax uncertainty
and its mean over the domain of $f$.
For the Community Atmosphere Model,
these lower confidence bounds are an appreciable fraction of the maximum.
To know that the emulator estimates $f$ accurately would require evidence that $f$ is typically
more regular than it is across the $n$ sample values.
\end{abstract}

\begin{keywords}
emulator, surrogate function, metamodel, minimax, Lipschitz, information-based complexity
\end{keywords}

\begin{AMS}
68Q17, 65D05, 68U20, 62P12 
\end{AMS}

\section{Introduction}

This paper studies the accuracy of emulators, also known as surrogate functions and metamodels.
Emulators are important
tools for approximating functions that have been observed only partially.
Kriging, Multivariate Adaptive Regression
Splines (MARS), Projection Pursuit Regression, Polynomial Chaos Expansions (PC),
Gaussian Process models (GP), and other Bayesian modeling techniques are
common methods for
constructing emulators~\cite{Sacks1989,Ben-Ari2007,Ghanem2008}.
We find error bounds for emulators in general---including the ``best possible'' method---rather
than focusing on any particular emulation method.\footnote{Software that computes the bounds
described in this paper is freely available at \url{https://github.com/jeff-regier/MiniMiniMaxUQ}\,.}

Emulators are frequently used to approximate expensive computer models, which
are often deterministic functions.\footnote{%
   They might not be entirely deterministic; for instance, they could involve
   Monte Carlo simulations.
   Moreover, in distributed parallel computations, numerical results can depend on
   the order in which subproblems happen to complete.
   These cases can be thought of as observing the function with noise.
   We do not address noise here; however, uncertainty
   in the observations makes accurate approximation more difficult.
   Because we focus on lower bounds on the difficulty of approximating the function accurately,
   our results generally remain lower bounds when the observations are not
   only incomplete, but also noisy.
   To extend our methods to include noise would involve
   finding a lower confidence bound on the regularity of the function.
}
Resources limit the number of times the computer model can be run,
though typically an intractable number of inputs is possible---for
instance if any input parameter is a floating point number.
By fitting an emulator to the output of a tractable number of runs for different
inputs, one can approximate the computer model inexpensively;
the issue is the accuracy of that approximation.

Computer models known as \emph{HEB} \cite{Shan2009} may be particularly
difficult to emulate:
They depend on \emph{H}igh-dimensional inputs;
they are \emph{E}xpensive to run; and they are effectively \emph{B}lack
boxes that are not amenable to closed-form, analytic study.
Because such models have high-dimensional inputs, it takes prohibitively many runs
to explore their domains: to attain a given sample density, the number
grows exponentially in the dimension.
Because the models are expensive, performing many runs is impractical or
impossible.
And because the models are black boxes, there are few
(if any) constraints to ensure that the error in extrapolating
from inputs actually tried to inputs not sampled is small.

HEB problems arise often in practice, for instance:
\begin{itemize}
	\item Climate models: \cite{Covey2011} (21--28-dimensional domains;
	1154~simulations; Kriging and MARS)
	\item Automobile crashes: \cite{AspenbergneLonn2012} (15-dimensional domain;
	55~simulations; polynomial response surfaces and artificial
	neural networks).
	\item Chemical reactions: \cite{Holena2011} (30--50-dimensional domain;
	boosted surrogate models) and \cite{Shorter1999}
	(46-dimensional domain; seconds per simulation).
	\item Aircraft design: \cite{Srivastava2004} (25-dimensional domain;
	500 simulations; response surfaces and Kriging), \cite{Koch1999}
	(22-dimensional domain; minutes per simulation;
	response surfaces and Kriging), and \cite{Booker1999} (31-dimensional
	domain; 20~minutes to several days per simulation;
	Kriging).
	\item Electric circuits: \cite{Bates1996} (60-dimensional domain;
	216~simulations; Kriging).
\end{itemize}
How accurately can a function $f$ be emulated from a given set of data?
How many evaluations of $f$ are required to guarantee that $f$ can be emulated
to a given level of accuracy?

Since $f$ is a ``black box,'' we do not know how rough it might be:
extrapolating beyond the data could entail arbitrarily large errors.
We assume that $f$ is regular
and find the resulting uncertainty in emulating $f$.
If the regularity assumption fails,
the uncertainty would be larger.
We measure the regularity of $f$ by
its \emph{absolute condition number} or
\emph{Lipschitz constant} $K$.
Similar results could be derived for other measures of regularity,
but Lipschitz bounds are particularly amenable to analysis.

The observations impose a lower bound $\hat{K}$ on $K$.
Suppose, optimistically, that the true Lipschitz constant
of $f$ is equal to this lower bound.
Then $f$ might be any member of the set $\Fkh$ of functions that agree
with the observations and have Lipschitz constant no greater than $\hat{K}$.
If an emulator is guaranteed to do well no matter which member of $\Fkh$
$f$ happens to be, then the uncertainty of that emulator is low.
On the other hand, if there are elements of $\Fkh$ that an emulator cannot
approximate well, the uncertainty is large.

Consider \emph{all} emulators that can be computed from the observations alone,
without additional knowledge of $f$; this collection includes emulators
constructed using GP, PC, MARS, and all the other methods mentioned above.
Viewed as a function of $w$ in the domain of $f$,
the minimax error among such emulation methods over the set
$\Fkh$ of functions that agree
with the observations and have Lipschitz constant no greater than $\hat{K}$
is the \emph{mini-minimax uncertainty} $\mmmUncertainty(w)$ in the title
of this paper.

The first ``mini'' refers to the regularity condition: since $K$ is not smaller than
$\hat{K}$, $\mmmUncertainty(w)$ is a lower bound on the minimax uncertainty for functions
that are as regular as $f$.
The second ``mini'' refers to emulators: this is the uncertainty for the
best emulator---including all the standard ones.
The ``max'' is over functions that agree with
$f$ at the observations and satisfy the optimistic regularity condition.
That is, $\mmmUncertainty(w)$ is the smallest that the uncertainty at $w$ could be,
for the best emulator, over the set of functions that have the highest degree of
regularity consistent with the observations and that agree with the observations.
The maximum of $\mmmUncertainty(w)$ over $w$ in the domain of $f$ is
an attainable lower bound on the maximum uncertainty of any emulator $\hat{f}$ of $f$.

If $K$ were known, this would be a standard
problem in information-based
complexity \cite{Packel1988,Traub1980,Traub1988}.
We derive bounds on the uncertainty using the lower bound $\hat{K}$
computed from the observed variation of $f$.
Section~\ref{sec:Bounds-on-num-evals}
derives a lower bound on the number of additional observations that
might be necessary to learn $f$.
Section~\ref{sec:Bounds-on-error}
derives two lower bounds on the maximum uncertainty
for approximating $f$ from a fixed set of observations: a purely empirical
bound and a bound expressed as a fraction of the
unknown Lipschitz constant.
The latter yields conditions under which emulating $f$
by a constant function, equal to the value of $f$ at the centroid of its domain,
has smaller maximum uncertainty than any emulator based on the $n$
actual observations.

Section~\ref{sec:applications} applies these bounds to two closed-form functions
(a high-dimensional cone and the \emph{borehole function}~\cite{Surjanovic2014})
and to a black-box function (the Community Atmosphere Model~\cite{Covey2011}).
Section~\ref{sec:extensions} extends the results for the maximum error
to quantiles of the error and the mean of the error over the domain of $f$.
Section~\ref{sec:conclusions} gives our conclusions.

\section{Notation and problem formulation\label{sec:Problem-formulation}}
\begin{table}
	\begin{tabular}{l@{}l@{}}
	    symbol$\qquad\;\;\;\;$ & meaning \\
         \toprule
	    $f$\Dotfill &  unknown function on $\unitcube$ to be emulated \\
	    $\hat{f}$ \Dotfill & an emulator \\
	    $X$ \Dotfill & finite subset of $\unitcube$ where $f$ is observed \\
	    $g|_Y$ \Dotfill & the restriction of the function $g$ to the set $Y \subset \unitcube$ \\
	    $f|_X$ \Dotfill & \emph{the data}: the restriction of $f$ to $X$ \\
	 \midrule
	    $K$ \Dotfill & Lipschitz constant of the function $f$ \\
	    $\hat{K}$ \Dotfill & smallest Lipschitz constant of any function that interpolates the data \\
	    $\F_{\kappa, Y}$ \Dotfill  & all functions that interpolate $f|_Y$ and have Lipschitz
	                              constant no larger \\ & than $\kappa$.  \\
           $\F_{\kappa}$ \Dotfill  & $\F_{\kappa, X}$ \\	  \midrule
	  $e_{\kappa}^{+}(w)$  \Dotfill & maximum value at $w$ among functions in $\F_{\kappa}$ \\
           $e_{\kappa}^{-}(w)$  \Dotfill & minimum value at $w$ among functions in $\F_{\kappa}$  \\
           $\mmfk(w)$ \Dotfill              & mean of  $e_{\kappa}^{+}(w)$ and $e_{\kappa}^{-}(w)$; 
                                                        the minimax emulator
                                                          at the point $w$ over \\ & functions in $\F_{\kappa}$ \\
	  \midrule
	    $\E_{\kappa, Y}(w; \hat{f})$ \Dotfill & \emph{maximum uncertainty of $\hat{f}$ at $w$}: 
	                               uncertainty of $\hat{f}$
	                                    at the point $w$ over \\ & functions
	                                              in $\F_{\kappa, Y}$ \\
	    $\E_{\kappa, Y} (w)$ \Dotfill &    \emph{minimax uncertainty at $w$}:  
	                uncertainty of the best possible emulator at \\ &
	                                               the point $w$ over   functions in $\F_{\kappa, Y}$ \\
	    $\E_{\kappa, Y} (\hat{f})$ \Dotfill &   \emph{maximum  uncertainty of $\hat{f}$}:
	                                                maximum (over $w \in \unitcube$) uncertainty \\& of $\hat{f}$
	                                              over functions in $\F_{\kappa, Y}$  \\
	    $\E_{\kappa, Y}$           \Dotfill &   \emph{minimax  uncertainty}:   maximum (over $w \in \unitcube$)
	                                             uncertainty of \\ & the best possible emulator
	                                              over functions in $\F_{\kappa, Y}$ \\
	    $\E_{\kappa}( \cdots )$ \Dotfill & when $Y = X$, we generally suppress $X$ from the subscript, viz.,
	                          $\E_{\kappa}( w; \hat{f})$, \\ & $\E_{\kappa} ( w)$, $\E_{\kappa} (\hat{f})$, and $\E_{\kappa}$\\
	   \midrule
	   $\M$ \Dotfill & \emph{minimum computational burden}: a lower bound on the number of\\ &
	                           additional observations needed to guarantee that the minimax  \\ &
								uncertainty is no larger than $\epsilon$\\
          \bottomrule
       \end{tabular}
\vspace{10px}
\caption{\protect \label{tab:notation} Summary of key notation}
\end{table}

The function $f$ is a fixed unknown real-valued function on $[0, 1]^p$,
the $p$-dimensional unit cube.
The space of real-valued continuous functions on \unitcube is $\mathcal{C}\unitcube$.
The Roman letters $i$, $j$, $p$, $q$, and $M_\epsilon$ denote integers.
Lowercase Greek letters denote real scalars, with the exception of $\mu$,
which denotes Lebesgue measure.
Uppercase Roman letters such as $X$ and $D$ denote subsets of $[0, 1]^p$;
$X$ is a fixed finite subset of $[0, 1]^p$.
Lowercase Roman letters from the end of the alphabet, such as $v$, $w$, $x$, $y$, and $z$,
denote points in $[0, 1]^p$.
The lowercase Roman letters $e$, $f$, $g$, and $h$ denote real-valued functions
on (subsets of) $[0, 1]^p$.
The domain of a function $g$ is $\dom(g)$.
The restriction of a function $g$ to $D \subset \dom(g)$ is denoted $g|_{D}$.
The observations from which $f$ is to be emulated are $f|_X$; that is, we observe $f$ on
the set $X$.
An emulator $\hat{f}$ is a real-valued function on $[0, 1]^p$.
Let $\| h \|_\infty \equiv \sup_{w \in \dom(h)} | h(w) |$,
the infinity-norm of $h$.
This paper studies how large $|\hat{f}(w) - f(w)|$ and $\|\hat{f} - f\|_\infty$
could be, for the best $\hat{f}$ chosen on the basis of the data---without 
other information about $f$.

Let $d$ be a metric on $\dom(g)$.
The (best) Lipschitz constant for $g$ is
\begin{equation}
\Lip(g) \equiv \sup \left \{
        \frac{g(v)-g(w)}{d(v,w)}:
        {v, w \in \dom(g) \mbox{ and } v \ne w} \right \}.
\end{equation}
If $f \notin \mathcal{C}\unitcube$, then $\Lip(f) \equiv \infty$.
Define
\[
   \F_{\kappa}(g)\equiv\{ (h: \unitcube \rightarrow \Re) :  \Lip(h) \le \kappa \mbox{ and }
   h|_{\dom(g)}=g\}.
\]
Then $\F_{\infty}(f|_X)$ is the space of (possibly discontinuous) functions that fit the $n$ data.
Some of our results involve values of $f$ at points other 
than the points $X$ at which $f$ was observed; 
$Y$ denotes a generic set of points in the domain of $f$.
To simplify notation, we set
\[
     \F_{\kappa, Y}\equiv\F_{\kappa}(f|_{Y}).
\]
When $Y = X$, we generally write $\F_{\kappa}$ in place of $\F_{\kappa, X}$.
\begin{defn*}
The \emph{ uncertainty at $w$ of $\hat{f}: \unitcube \rightarrow \Re$ over
the set of functions $\F_{\kappa, Y}$} is
\[
      \E_{\kappa, Y}( w; \hat{f}) \equiv \sup_{g \in \F_{\kappa, Y}} |\hat{f}(w) - g(w)|
\]
The \emph{minimax uncertainty at $w$ over the set of functions $\F_{\kappa, Y}$} is
\[
      \E_{\kappa, Y} ( w) \equiv \inf_{\hat{f}: \unitcube \rightarrow \Re} \E_{\kappa, Y}( w; \hat{f}).
\]
The \emph{maximum uncertainty of $\hat{f}: \unitcube \rightarrow \Re$ over
the set of functions $\F_{\kappa, Y}$} is
\[
      \E_{\kappa, Y}( \hat{f}) \equiv \sup_{w \in \unitcube} \E_{\kappa, Y}(w; \hat{f})
         = \sup_{g \in \F_{\kappa, Y}} \|\hat{f}-g\|_{\infty}.
\]
The \emph{minimax maximum uncertainty over
the set of functions $\F_{\kappa, Y}$} is
\[
      \E_{\kappa, Y} \equiv \inf_{\hat{f}: \unitcube \rightarrow \Re} \E_{\kappa, Y}( \hat{f}).
\]
\end{defn*}
The emulator $\hat{f}$ approximates $f$ within
$\E_\infty( w; \hat{f})$ at the point $w$
if $f$ is in $\F_{\infty}$, the set of functions
that agree with the observations.
However, $\E_\infty( w; \hat{f})$ is infinite for every $\hat{f}$
unless $w \in X$, even if $f$ is guaranteed to be continuous.\footnote{%
	The set $X$ is not dense in $[0, 1]^p$, so
	for any $c>0$, there exists some function $g\in\F_{\infty}(f|_{X})$
	satisfying $\|f-g\|_{\infty}>c$.
}
To guarantee that the uncertainty is finite requires stronger
regularity than mere continuity.

Let $K \equiv \Lip(f)$ and $\hat{K} \equiv \Lip(f|_X)$.
Because $X\subset\unitcube$, $\hat{K}\le K$, as illustrated in figure~\ref{fig:k_vs_khat}.
(There and in subsequent figures, $p=1$ and the bold
black dots represent $f|_X$, the observations of $f$ at $x \in X$.)

\begin{figure}
\centering
\psfrag{f}{$f$}
\includegraphics{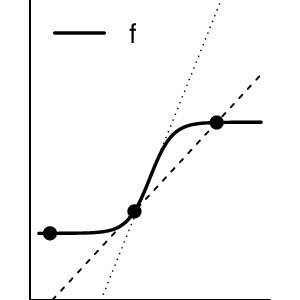}
\caption{\label{fig:k_vs_khat}
    Illustration of the difference between the true Lipschitz constant $K$ and the
    empirical lower bound $\hat{K}$ for $K$.
    The dotted line is tangent to $f$ where $f$
	attains its Lipschitz constant: it has slope $K$.
	The dashed line is the steepest line that intersects any pair of observations:
	it has slope $\hat{K} \le K$.}
\end{figure}

Define
\[
   e_{\kappa}^{+}(w)\equiv
   \min_{x\in X}\left[f(x)+\kappa d(x,w)\right]
\]
and
\[
   e_{\kappa}^{-}(w)\equiv
   \max_{x\in X}\left[f(x)-\kappa d(x,w)\right].
\]
The mean of the two is
\[
     \mmfk(w) \equiv \mmfk(w; X, \kappa) \equiv \frac{e_{\kappa}^{-}(w) + e_{\kappa}^{+}(w)}{2}.
\]
Figures~\ref{fig:sin-panels} and \ref{fig:tilt-panels} illustrate
these definitions.
The proof of Proposition~\ref{prop:supequiv} shows that
the function $\mmfk(w)$ is the minimax emulator for pointwise error over the class 
$\Fk$ of functions that agree with the data and have Lipschitz constant
no greater than $\kappa$.
The minimax emulator $\mmfk(w)$ interpolates (rather than smooths) the data.

\begin{figure}
\centering

\psfrag{ep}{$e_{\kappa}^+$}
\psfrag{fs}{$\hat{f}_{\kappa}$}
\psfrag{em}{$e_{\kappa}^-$}
\psfrag{f}{$f$}

\includegraphics{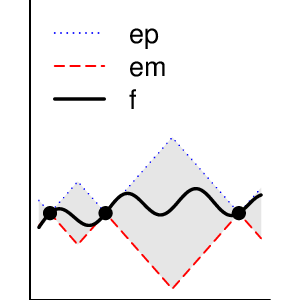}\includegraphics{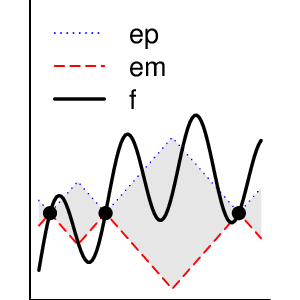}

\caption{\label{fig:sin-panels}
    Illustration of the upper and lower envelope functions $e_{\kappa}^{-}$ and $e_{\kappa}^{+}$.
    In the left panel, $\kappa=K$; in the right, $\kappa<K$. 
    If $\kappa\ge K$, then $e_{\kappa}^{-}\le f\le e_{\kappa}^{+}$, and, equivalently, $f \in \F_{\kappa}$.
	\vspace{10px}
}
\end{figure}

\begin{figure}
\centering

\psfrag{ep}{$e_{\kappa}^+$}
\psfrag{em}{$e_{\kappa}^-$}
\psfrag{fs}{$\hat{f}_{\kappa}$}
\psfrag{f}{$f$}

\includegraphics{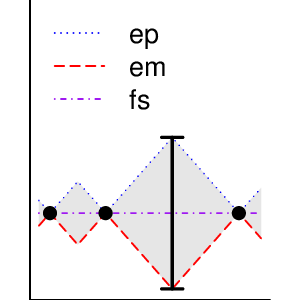}\includegraphics{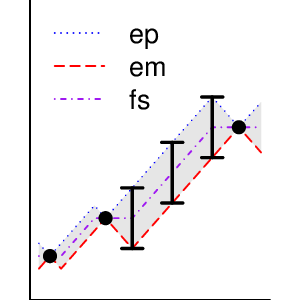}\includegraphics{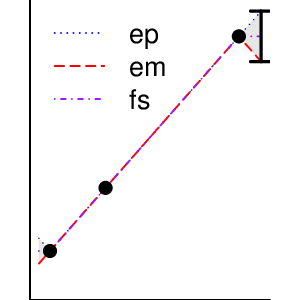}

\caption{\label{fig:tilt-panels}
Illustration of how the pointwise uncertainty depends on the observed variation
of $f$: the uncertainty is smaller where the data require $f$ to vary rapidly.
The vertical distance between the blue and
red curves is twice the uncertainty at the corresponding abscissa.
The black error bars are at some points where the uncertainty is largest.
The succession of panels shows that as the slope between observations
approaches $\kappa$, $\E_{\kappa}(w)$ approaches $0$ for points $w$
between observations, and the maximum uncertainty decreases.}

\end{figure}

\section{Bounds on the number of observations needed to approximate $f$ well}
\label{sec:Bounds-on-num-evals}

In this section we construct a function $\bar{f}$ that agrees with the data $f|_X$,
has Lipschitz constant $\hat{K}$ (the smallest Lipschitz constant consistent with the data),
and yet would require a large number $\M$ of additional observations $f|_Y$ to estimate $f$ within
$\epsilon$ on $\unitcube$.\footnote{%
   We do not discuss the choice of $\epsilon > 0$ in detail:
   scientific context should inform the choice.
   In examples below, we set $\epsilon$ to be an absolute tolerance,
    a fraction of $\hat{K}$, and a fraction of $K$.
  One might also consider relating $\epsilon$ to the
  ``typical value'' of $f$ (e.g., the mean of $f$ or of $f|_X$).
}
The function $\bar{f}$ is not intended to be an emulator---it is a technical device.
Since $f$ could in fact be $\bar{f}$, this gives a lower bound on the number of additional
observations that might be required to estimate $f$ well, even if $f$ is no rougher than the
original data $f|_X$ reveal it to be.

Let $B(x,\delta)$ denote the open ball in $\mathbb{R}^{p}$ centered
at $x$ with radius $\delta$.
Since $f$ has Lipschitz constant $K$, $f(y)$ is guaranteed to be within $\epsilon$ of $f(x)$
if $y \in B(x, \epsilon/K)$.
But depending on $f$ and $X$, it can happen that $\mmfKhat$ is guaranteed to be within
$\pm \epsilon$ of every $g \in \F_{K}$ for parts of the domain not contained in
$\cup_{x \in X} B(x, \epsilon/K)$.
To see this, consider $p=1$, $f(x)=x$, and let $X$ be the two-element set $\{0,1\}$.
Then $K=\hat{K}=1$.
In this case, the observations $f|_{X}$ determine $f$ exactly:
the only function in $\F_{K}$ is $f$.
In this example, for a function $g$ to agree with the
observations requires it to attain the Lipschitz constant $K$ everywhere.
A function cannot agree with the observations and ``run away'' from $f$ very far.

More generally, if $f$ varies on $X$, then for a function $g$ to agree with $f$
at the observations, $g$ must vary too.
That required variation ``spends'' some of $g$'s Lipschitz constant,
preventing $g$ from running as far away from $f$ as it could if $f_X$ were constant.
We now quantify this intuition to construct a function $\bar{f}$ that requires many
additional observations to estimate well.
The function $\bar{f}$ is constant 
``as much as possible'' subject to the constraint that it interpolates
the data and has Lipschitz constant $\hat{K}$.
Since estimating $\bar{f}$ where it is constant is hard 
(as illustrated in figure~\ref{fig:tilt-panels}), 
the size of the set where
$\bar{f}$ could be constant gives a lower bound on the  
number of additional observations that might be required.

Define $\bar{\gamma}\equiv\argmin_{\gamma\in\mathbb{R}}\sum_{x\in X}\left|f(x)-\gamma\right|^{p}$.
Computing $\bar{\gamma}$ is straightforward because the objective
function is univariate and convex.\footnote{%
   Alternatively,  we could set
   $\bar{\gamma}\equiv\frac{1}{\#X}\sum_{x\in X}f(x)$, where $\#X$
   is the size of $X$.
   The resulting lower bound may not be as tight.
}
Let $X^{+}\equiv\{x\in X:f(x)\ge\bar{\gamma}\}$ and let $X^{-}\equiv\{x\in X:f(x)<\bar{\gamma}\}$.
Let
\[
          Q_{+}\equiv\bigcup_{x\in X^{+}}\left\{ B\left(x,\frac{f(x)-\bar{\gamma}}{\hat{K}}\right)
              \bigcap\unitcube\right\}
\]
and
\[
           Q_{-}\equiv\bigcup_{x\in X^{-}}\left\{ B\left(x,\frac{\bar{\gamma}-f(x)}{\hat{K}}\right)
                \bigcap\unitcube\right\} .
\]
Then $Q_{+}\cap Q_{-}=\emptyset$.%
\footnote{%
	Fix $x^{+}\in X^{+}$ and $x^{-}\in X^{-}$.
    Then $\left|f(x^{+})-f(x^{-})\right|/
		{d(x^{+},x^{-})}\le\hat{K}$.
	Equivalently, $d(x^{+},x^{-})\ge
		\left|f(x^{+})-f(x^{-})\right|/{\hat{K}}$.
	Let $B^{+}=B\left(x^{+},
		\left[f(x)-\bar{\gamma}\right]\hat{K}\right)$
	and $B^{-}=B\left(x^-,
		\left[\bar{\gamma}-f(x)\right]/\hat{K}\right)$.
	Let $a$ be the sum of the radii of $B^{+}$ and $B^{-}$.
	Then $a=\left(f(x^{+})-\bar{\gamma}\right)/\hat{K}+
		\left({\bar{\gamma}-f(x^{-})}\right)/\hat{K}
		=\left({f(x^{+})-f(x^{-})}\right)/\hat{K}$,
	and $a\le d(x^{+},x^{-})$. Therefore, $B^{+}\cap B^{-}=\emptyset$.
	Because our selection of $x^{+}\in X^{+}$ and $x^{-}\in X^{-}$ was
	arbitrary, $Q^{+}\cap Q^{-}=\emptyset$.%
}

Define
\begin{align*}
\bar{f}:\unitcube & \to\mathbb{R}\\
w & \mapsto
  \left \{
      \begin{array}{ll}
             e_{\hat{K}}^{-}(w),  & w\in Q_{+}\\
             e_{\hat{K}}^{+}(w), & w\in Q_{-}\\
             \bar{\gamma}, & \mbox{otherwise}.
      \end{array}
   \right .
\end{align*}
Figure \ref{fig:fbar} illustrates this definition. If we know $f|_{X}$,
we know $\bar{f}$.
By construction, $\bar{f} \in \F_{\hat{K}} \subset \F_{K}$.

\begin{figure}
\centering

\psfrag{ep}{$e_{\hat{K}}^+$}
\psfrag{em}{$e_{\hat{K}}^-$}
\psfrag{fb}{$\bar{f}$}
\psfrag{yb}{$\bar{\gamma}$}

\includegraphics{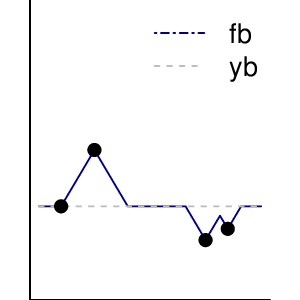}\includegraphics{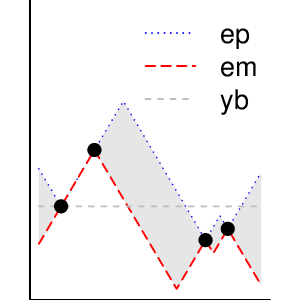}

	\caption{\label{fig:fbar}A function that agrees with the data, has Lipschitz constant $\hat{K}$,
	and is hard to estimate because it is often constant.
	The function $\bar{f}$ (shown in the left panel)
	is comprised of segments of $e_{\hat{K}}^{+}$, $e_{\hat{K}}^{-}$
	and the constant function $\bar{\gamma}$ (all shown in the right panel).
	It is constant over roughly half of the domain.
	No function between $e_{\hat{K}}^{-}$ and $e_{\hat{K}}^{+}$ (inclusive) is constant
	over a larger fraction of the domain.}

\end{figure}

Let $\bar{Q}\equiv\unitcube\setminus(Q_{+}\cup Q_{-})$. Let $\mu$
be Lebesgue measure.
By the union bound, because $\mu(\unitcube)=1$,
\[
      \mu(\bar{Q})\ge1-\sum_{x\in X}\mu\left(B\left(x,  |f(x)-\bar{\gamma}|/\hat{K} \right)\right).
\]
Let $C_{2}\equiv\frac{\pi^{p/2}}{\Gamma(p/2+1)}$ and $C_{\infty}\equiv2^{p}$,
where $\Gamma$ is the gamma function.
Then, for $q\in\{2,\infty\}$,
\[
       \mu(\bar{Q})\ge1-C_{q}\sum_{x\in X}\left(|f(x)-\bar{\gamma}|/\hat{K} \right)^{p}.
\]
If there is some $x \in X$ for which for all $g \in \F_{\hat{K}, \{x\}}$,
$|g(y) - f(x)| < \epsilon$ for all $y \in A\subset\bar{Q}$, then $\mu(A)\le$$\mu(B(0, \epsilon/\hat{K}))$.
Hence, because $\bar{f}\in\F_{K}$,
\begin{eqnarray}
   \M & \ge & \left\lceil \frac{\mu(\bar{Q})}{\mu(B(0,\epsilon/\hat{K}))}\right\rceil \nonumber \\
 & \ge & \left\lceil \epsilon^{-p}\left[\frac{\hat{K}^{p}}{C_{q}}-\sum_{x\in X}|f(x)-\bar{\gamma}|^{p}\right]\right\rceil .\label{eq:M_lower_bound}
\end{eqnarray}
Section~\ref{sec:applications} shows that this lower bound, the \emph{minimum computational burden}, 
can be extremely large for even modest problem dimensions $p$.

\section{Bounds on the maximum uncertainty for a fixed experimental design\label{sec:Bounds-on-error}}
The previous section gave lower bounds on the number of additional observations of $f$
required to attain a desired maximum uncertainty $\epsilon$.
This section gives two lower bounds on the maximum uncertainty
$\E_{K} (\hat{f})$
for a fixed experimental design $X$:
an absolute bound and a bound expressed as a fraction of $K$.
The bound as a fraction of $K$ can yield a strong negative result: when
a statistic---calculable from the observations---exceeds a calculable
threshold, the maximum uncertainty is not less than the maximum uncertainty
of the best emulator based on a single observation at the centroid of the domain.
If the goal is to minimize the maximum uncertainty, we could have just approximated
$f$ as constant and saved $\#X - 1$ observations.

\subsection{Lower bounds}
%
Consider the set $\F_\kappa$ of functions $g$ that agree with the observations $f|_X$ and
have Lipschitz constant no larger than $\kappa$.
Consider all possible emulators $\hat{f}$.
Proposition~\ref{prop:supequiv} states that
the smallest (across emulators $\hat{f}$) maximum (across functions $g$)
error at the point $w \in \unitcube$ is $[e_{\kappa}^+(w) - e_{\kappa}^-(w)]/2$, and
the emulator $\mmfk(w)$ attains this bound at every $w$.

\begin{restatable}{prop}{minimaxEquivProp}\label{prop:supequiv}
  If $\kappa \ge \hat{K}$, then
\[
	\E_{\kappa}(w) = \E_{\kappa}(w; \mmfk) = \frac{e_{\kappa}^+(w) - e_{\kappa}^-(w)}{2}.
\]
\end{restatable}
\noindent
Proofs, including the proof of Proposition~\ref{prop:supequiv}, are in appendix~\ref{sec:proofs}.

\begin{restatable}{cor}{empiricallbcor} \label{cor:empiricallb}
For any emulator $\hat{f}$,
\begin{equation}
   \E_K(w, \hat{f}) \ge \E_K(w)  \ge  \E_{\hat{K}}(w; \mmfKhat).
\end{equation}
\end{restatable}
\noindent
Corollary~\ref{cor:empiricallb} follows from proposition~\ref{prop:supequiv}
and the fact that, since $\hat{K}\le K$,
\[
   \F_{\hat{K}} \subset \F_{K}.
\]
Corollary~\ref{cor:empiricallb} is one of our principal results:
$\E_{\hat{K}}$, a statistic
calculable solely from the observations $f|_X$, is a lower bound on the maximum uncertainty
for \emph{any} emulator $\hat{f}$ based on the observations $f|_X$.
Theorem~\ref{thm:main} gives a stronger lower bound in terms of the unknown value of $K$.
\begin{restatable}{thm}{thmmain}
\label{thm:main}
	For any $\lambda \in \Re^+$, \textup{if $\mmmUncertainty \ge \lambda \hat{K}$,
	then $ \E_K( \hat{f})\ge  \lambda  K$.}
\end{restatable}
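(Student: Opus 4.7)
The plan is to reduce the theorem to the single inequality $\sup e_K^\star \ge \lambda K$. Given that, Proposition~\ref{prop:supequiv} applied at $\kappa = K$ gives $\potentialError_K^* = \sup e_K^\star$, and the triangle-inequality bound~(\ref{eq:supf0}) applied at $\kappa = K$ gives $\potentialError_K(\hat{f}) \ge r(\F_K) = \potentialError_K^*$ for any emulator $\hat{f}$; chaining these yields the theorem. Thus everything reduces to comparing $\sup e_\kappa^\star$ at $\kappa = \hat{K}$ with that at $\kappa = K$.

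The key lemma I would prove is that $\kappa \mapsto \sup e_\kappa^\star / \kappa$ is non-decreasing on $(0, \infty)$. Because $\hat{K} \le K$, this immediately yields $\sup e_K^\star / K \ge \sup e_{\hat{K}}^\star / \hat{K} \ge \lambda$, hence $\sup e_K^\star \ge \lambda K$. To prove the monotonicity, first rewrite $2 e_\kappa^\star(w)$ as a single joint minimum
\[
2 e_\kappa^\star(w) = \min_{(x_1, x_2) \in X^2}\bigl[f(x_1) - f(x_2) + \kappa(d(x_1, w) + d(x_2, w))\bigr],
\]
then use the symmetry of $X^2$ under swapping coordinates---which flips the sign of $f(x_1) - f(x_2)$ but leaves the distance term fixed---to replace $f(x_1) - f(x_2)$ by $-|f(x_1) - f(x_2)|$; the diagonal $x_1 = x_2$ is absorbed harmlessly since both forms reduce to $2\kappa d(x_1, w)$ there. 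Dividing through by $\kappa > 0$, every term in the minimand becomes $-C/\kappa + D$ with $C, D \ge 0$, and each such term is non-decreasing in $\kappa$. Pointwise minima and suprema preserve monotonicity in a parameter, so $\kappa \mapsto 2\sup e_\kappa^\star / \kappa$ is non-decreasing, as required.

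The main obstacle is the rewriting step: the identity collapsing a min-minus-max into a single joint minimum with an absolute-value term is what makes the $\kappa$-dependence decouple cleanly term-by-term. Without that reformulation, the min/max interplay as $\kappa$ varies is awkward to track, because for fixed $w$ the optimizers $x_1 = \argmin$ for $e_\kappa^+$ and $x_2 = \argmax$ for $e_\kappa^-$ generally shift with $\kappa$. Once the joint-min form is in hand, the remainder of the argument is routine.
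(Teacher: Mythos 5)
Your proof is correct, and while your outer reduction matches the paper's (both use Proposition~\ref{prop:supequiv} to identify $\potentialError_K^*$ with $\sup e_K^\star$, and the triangle-inequality bound~(\ref{eq:supf0}) to pass from an arbitrary $\hat{f}$ to the minimax emulator), your argument for the crucial scaling step is genuinely different. The paper proves the pointwise inequality $e_K^\star(w) \ge (K/\hat{K})\, e_{\hat{K}}^\star(w)$ (its step~(\ref{eq:thm_formerly_2})) by writing $2e_\kappa^\star(w)$ as the length of $\bigcap_{x\in X} I\left(f(x), \kappa d(x,w)\right)$ and invoking the abstract interval-intersection Lemma~\ref{lem:interval}: shrinking all radii by a factor $\alpha\in[0,1]$ shrinks the intersection length by at least the factor $\alpha$. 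You instead derive the explicit formula $2e_\kappa^\star(w) = \min_{(x_1,x_2)\in X^2}\bigl[-|f(x_1)-f(x_2)| + \kappa\bigl(d(x_1,w)+d(x_2,w)\bigr)\bigr]$ via the swap symmetry of $X^2$, after which monotonicity of $e_\kappa^\star(w)/\kappa$ in $\kappa$ (hence of the supremum) is immediate term by term; this establishes the same underlying fact by direct computation rather than by a containment argument. Your route buys a transparent closed form for $e_\kappa^\star$ whose $\kappa$-dependence decouples cleanly, and a reusable monotonicity statement; the paper's route buys a short, self-contained geometric lemma stated for arbitrary centers and radii, which avoids the pair-symmetry bookkeeping. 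One shared caveat, not a gap relative to the paper: your division by $\hat{K}$ and the paper's factor $K/\hat{K}$ both implicitly assume $\hat{K}>0$, so neither argument covers the degenerate case of constant observations.
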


\subsection{\label{sub:one-observation}Maximum uncertainty for an emulator based on one observation}
In this section we work in $\ell_\infty$: $d(v,w)=\|v-w\|_{\infty}$.
This simplifies the calculations and gives a particularly strong result.

Let $z\equiv\left(1/2,\ldots,1/2\right)$, the centroid
of \unitcube, and let $Z \equiv \{z\}$.
Let $\hat{g}\in\F_{\infty, Z}$
be the constant function $\hat{g}(w)\equiv f(z)$, $\forall w \in \unitcube$.
The $\ell_\infty$ distance from $z$ to any point on the boundary
of \unitcube is $1/2$, so
\[
   \E_{K,Z}( \hat{g})=\frac{K}{2}.
\]
That is, the maximum uncertainty of the emulator that is constant throughout 
$\unitcube$ and equal to the value of $f$ at the centroid of the cube is $K/2$.
Let $W\subset\unitcube$ be finite and $c\in\mathbb{R}$.
Suppose $f$ is constant on the set $W$ and that $W$ contains fewer than $2^{p}$ points.
Let $\hat{h} \in \F_{\infty, W}$.
By examining the corners of the domain, it follows that 
\[
    \E_{K, W}(\hat{h}) \ge \frac{K}{2}.
\]
Making $2^p$ observations of $f$ is intractable for the Community Atmosphere Model 
and for many other applications.
If $f$ is nearly constant, the situation may still be hopeless.

How do we know whether $f|_{X}$ is too close to constant to benefit
from observing it more than once, but fewer than $2^{p}$ times?
\begin{restatable}{cor}{mainthmcor} \label{cor:mainthm}
If
$\E_{\hat{K}} \ge \hat{K}/2$, then
\[
      \E_K( \hat{f}) \ge \frac{K}{2} \ge \E_{K,Z}(\hat{g}).
\]
\end{restatable}
\noindent
That is, if $\E_{\hat{K}}\ge \hat{K}/2$,
no emulator based on observing $f|_X$ has smaller maximum uncertainty than
the constant emulator based on a single observation---$f$ is too nearly constant.
Corollary~\ref{cor:mainthm} follows directly from theorem~\ref{thm:main},
taking $\lambda  = \hat{K}/2$.

\section{Applications\label{sec:applications}}
This section presents three examples of increasing complexity: 
two in which $f$ is known analytically, and one in which
$f$ is \emph{HEB} arising from a numerical model of climate.
In this section the distance metric is $d(v, w) = \| v - w \|_\infty$, except where noted.

\subsection{High-dimensional $\ell_\infty$ cone}
Consider a emulating a function defined on the 21-dimensional hypercube $[0,1]^{21}$;
$z \equiv \left(0.5,\ldots,0.5\right)$ denotes the center of that hypercube.
Suppose
\[
	f(x) \equiv \|x-z\|_{\infty}.
\]
We observe $f$ at $z$ and, for $i=1, \ldots 21$, at both
points satisfying $x_{i}\in\{0,1\}$ and $x_{j}=0.5$ for $j\ne i$.
(This is a ``one-at-a-time'' sampling design, where one component at a time is shifted from
a typical value to a more extreme value.)
These 43~points constitute $X$.
Then
\[
	\hat{K} = K = 1.
\]
Because every point $w\in[0,1]^{21}$ is within $0.5$ of $x\in X$
satisfying $f(x)=0.5$,
\[
	e_{\hat{K}}^{-} \ge 0.
\]
Because every point $w\in[0,1]^{21}$ is within $0.5$ of $z$, and
$f(z)=0$,
\[
	e_{\hat{K}}^{+} \le 0.5.
\]
Hence, by corollary~\ref{cor:empiricallb},
\[
	\mathcal{E}_{\hat{K}} \le 0.25.
\]
Had we only observed $f$ at $z$ but fixed $\hat{K}=1$ (or observed
$f$ at another point in addition to $z$ and computed $\hat{K}$ from those two points),
\[
	\mathcal{E}_{\hat{K}} = 0.5.
\]
In this example, despite the high dimension of $\dom(f)$, emulating $f$ using
a modest number of observations (43)
has smaller maximum uncertainty than emulating $f$ using just a
single observation of $f$ at $z$: a small number of observations
may constrain a high-dimensional function globally.
High-dimensional problems with small numbers of data do not necessarily have large uncertainties,
as  ``the curse of dimensionality'' would suggest.
The dimension matters, but so does $f$ itself.

To connect our results to  a common emulation method, we fit a Gaussian
process to $f|_X$ by maximum likelihood using the R package {\tt mlegp}~\cite{Dancik2013}.
For 100,000 points selected uniformly at random from $\unitcube$, the
mean error is $0.02$, $2$\% of $K$.
The maximum error at these 100,000 points is $0.23$,
but the error at $(0.6,\ldots, 0.6)$---which is not in the sample---is $0.38$.\footnote{%
   This point was found by searching the ray $c(1,\ldots, 1)$; there might be points with
   even larger errors.
}
Because the error of $\hat{f}_{\hat{K}}$ is no greater than $\mathcal{E}_{\hat{K}} = 0.25$,
for this $f$, the minimax emulator $\hat{f}_{\hat{K}}$ outperforms this 
Gaussian process emulator both in minimax uncertainty and in actual maximum error.

\subsection{Borehole function}

The commonly used test function
\[
	f_0(H_u, H_\ell, T_u, T_\ell, r, r_w, L, K_w)
	       \equiv \frac{2\pi T_{u}\left(H_{u}-H_{l}\right)}
		{\log\left(r/r_{w}\right)\left(1+\frac{2LT_{u}}{\log\left(r/r_{w}\right)r_{w}^{2}K_{w}}+
		\frac{T_{u}}{T_{\ell}}\right)}
\]
models water flow through a borehole~\cite{Surjanovic2014}.
Its input variables are described in table~\ref{tab:borehole_domain}, which also lists the ranges of
those variables.
The output is water flow rate in cubic meters per year.
We rescale $f_0$ so that its inputs range over the 8-dimensional unit hypercube $[0,1]^8$;
the resulting function is denoted $f$.

\begin{table}
\centering
	\caption{ \protect{\label{tab:borehole_domain}}
	Borehole function domain}
	\begin{tabular}{l|l|l}
		variable & range & description \\
		\hline
		$H_u$ & $[990, 1110]$ & potentiometric head of upper aquifer (m) \\
		$H_\ell$ & $[700, 820]$ & potentiometric head of lower aquifer (m) \\
		$T_u$ & $[63070, 115600]$ & transmissivity of upper aquifer ($\text{m}^2/$yr) \\
		$T_\ell$ & $[63.1, 116]$ & transmissivity of lower aquifer ($\text{m}^2/$yr) \\
		$r$ & $[100, 50000]$ & radius of influence (m) \\
		$r_w$ & $[0.05, 0.15]$ & radius of borehole (m) \\
		$L$ & $[1120, 1680]$ & length of borehole (m) \\
		$K_w$ & $[9855, 12045]$ & hydraulic conductivity of borehole (m/yr)
	\end{tabular}

\end{table}
Reasoning about the functional form of $f$ (appendix \ref{sec:borehole_K})
shows that
\[
	944 \le K \le 1200.
\]
Of course, if $f$ really were a black box, such reasoning would be impossible.
We estimated $\hat{K}$ from 1000 sample points selected in two different ways:
\begin{enumerate}
   \item Select 1000 points by Latin hypercube sampling.
           This yields $\hat{K} = 367$.
  \item Select 100 points by Latin hypercube sampling.
           For each of these points,  draw an additional 9~points a small
           distance ($10^{-5}$) from it in each coordinate, in a random
           direction.
           This yields $\hat{K} = 576$.
\end{enumerate}
We fix $\hat{K} = 576$ for the remainder of this example; note that this is roughly half the true
value of $K$.

Now let $X$ contain the following 273~points:
all $2^{8}=256$ corners of $\left[0,1\right]^{8}$,
the center of the domain $(0.5, \ldots, 0.5)$, and, for
$i=1, \ldots, 8$, each of the two points satisfying $x_{i}\in\{0,1\}$ and $x_{j}=0.5$
for $j\ne i$.
(The empirical Lipschitz constant of $f$ on this set is less than $576$.)
By branch-and-bound we find
\[
	\mathcal{E}_{576} < 207
\]
which is less than $576/2$.
Hence, by corollary~\ref{cor:empiricallb}, the best emulator $\hat{f}_{576}$
based on $f|_X$ has lower maximum uncertainty than the best
emulator based on $f|_{\{z\}}$ alone.

Holding $X$ fixed, we now lower-bound $\M$,
the minimum computational burden (section~\ref{sec:Bounds-on-num-evals}).
Convex programming finds $\bar{\gamma}=134.7$.
The union bound implies that the proportion of the domain where
$f$ could be constant is $\mu\left(\bar{Q}\right) \ge 0.76$.
Then for $\epsilon = 100$ (about 20\% of $\hat{K}$ or 10\% of $K$),
$M_{100} \ge 3598$~additional observations might be needed.
But for $\epsilon = 10$, $M_{10} \ge 3.59\times10^{11}$
additional observations might be required.

For comparison, we emulate $f$ by a Gaussian process, again estimating the parameters using the
R package {\tt mlegp}~\cite{Dancik2013} from the same set $X$ of 273~points.
For 100,000 points selected at random uniformly from $[0, 1]^8$, the mean error is
$37.3$, approximately $3$\% of $K$.
The maximum error at these points is $207.9$, approximately $20$\% of $K$.

\subsection{\label{sub:climate_app_1}Climate modeling}

The Uncertainty Quantification Initiative at Lawrence
Livermore National Laboratory\footnote{%
	This dataset was provided by the
	Institutional Science and Technology Office at
	Lawrence Livermore National Laboratory
	under the Uncertainty Quantification Strategic Initiative
	Laboratory-Directed Research and Development Project 10-SI-013.
}
provided results from $1154$ climate simulations using the
Community Atmosphere Model (CAM) with $p=21$ parameters.
Each parameter was scaled so that the interval $[0,1]$
contained all values considered physically reasonable.
The output of interest was a scalar, the simulated global
average upwelling longwave flux (FLUT) averaged over the third through twelfth
years of the simulation (a $10$-year average after a 2-year burn-in).
Each such average is deterministic: repeating a run with the same input parameters
should produce the same output.
The simulator amounts to a function $f$ that maps
$\unitcube \rightarrow \mathbb{R}$.
Running the simulator was computationally expensive; each run took several
days on a supercomputer.
The Lawrence Livermore National Laboratory team used several approaches
to choose the points $X \subset \unitcube$
at which to run simulations, including
Latin hypercube, one-at-a-time, and random-walk multiple-one-at-a-time~\cite{Covey2011}.
The 1154~simulations include all points selected by any of those approaches.

For these observations, we find $\bar{\gamma}=232.77$, 
$\hat{K}=14.20$ for $q=2$, and $\hat{K}=34.68$ for $q = \infty$.

\subsubsection{Computational burden}
By (\ref{eq:M_lower_bound}),
\[
\M \ge\left\lceil \epsilon^{-21}
     \left[\frac{1.57\times10^{24}}{0.014}-6.81\times10^{24}\right]\right\rceil >\epsilon^{-21}\times10^{25}
\]
for $q = 2$.
For example, if $\epsilon$ is $1\%$ of $\hat{K}$, then $\M \ge10^{43}$.
Even if $\epsilon$ is $50\%$ of $\hat{K}$, $\M >10^{8}$.
For $q=\infty$, 
\[
\M \ge \left\lceil \epsilon^{-21}
          \left[\frac{2.19\times10^{32}}{2^{21}}-6.81\times10^{24}\right]\right\rceil >
          \epsilon^{-21}\times10^{25}.
\]
These lower bounds on the minimum computational burden are extreme
for a wide range of values of $\epsilon$: there are functions that fit the 1154~observations
and are as regular as the observations allow, but that cannot be approximated
with useful uncertainty from any tractable number of observations.
The function $\bar{f}$, which is simple to construct, attains these lower
bounds on minimum computational burden.
Note the contrast with the cone example, which was also 21-dimensional: 
the dimension of $\dom(f)$ does not by itself determine how hard it is to emulate $f$ accurately.

\subsubsection{\label{sub:Application2} Uncertainty}
Is the maximum uncertainty of the best emulator based on
observing $f$ at the $1154$ points in $X$
lower than the maximum  uncertainty of the
constant emulator based on one observation of $f$ at the centroid of $\unitcube$?
We cannot simply compute these two maximum  uncertainties,
because $K$ is unknown.
But corollary~\ref{cor:mainthm} applies if we can determine
whether $\E_{\hat{K}}\ge \hat{K} / 2$.
Unfortunately, determining $\E_{\hat{K}}$ is difficult.
In $\ell_\infty$, if $f|_X$ is constant,
finding $\E_{\hat{K}}$ amounts to finding a maximal empty hypercube,
a problem recently shown to be NP-hard in $p$~\cite{Backer2010}.
It is generally no easier if $f$ varies on $X$.
Fortunately, it suffices to bound $\E_{\hat{K}}$.
By working in $\ell_\infty$, we can bound $\E_{\hat{K}}$
above and below by considering just the corners of $\unitcube$;
we take $d(v,w)=\|v-w\|_{\infty}$ throughout this section.

\begin{restatable}{prop}{climateubprop}
\label{prop:climateub}
Let $ \cornerA \equiv(0, \ldots, 0)$, $ \cornerB \equiv(1, \ldots, 1)$,
and $\tilde{d}(v ) \equiv \max \left(d(v , \cornerA ),d(v , \cornerB )\right)$.
Then
\[
\E_{\hat{K}} \le
	\frac{1}{2}\left\{
		\min_{x\in X}\left[f(x)+\hat{K}\tilde{d}(x)\right]
			-\max_{x\in X}\left[f(x)-\hat{K}\tilde{d}(x)\right]\right\}.
\]
\end{restatable}

\noindent
Using this proposition, we calculate $\E_{\hat{K}} \le 20.95$ for the
CAM dataset.
On the other hand, the maximum over all $\unitcube$ is at least as large as 
the maximum over the corners of $\unitcube$:
\begin{eqnarray*}
\E_{\hat{K}}\ge\max\left\{ \E_{\hat{K}}(w):\forall w\in\{0,1\}^{p}\right\}.
\end{eqnarray*}
Perhaps surprisingly, this lower bound is essentially sharp for the CAM dataset.
The domain \unitcube contains $2^{p}$ corners $\{r_{i}\}_{i= 1}^{2^p}$.
Divide $\unitcube$ into $2^p$
hypercubes $\{R_i\}_{i=1}^{2^p}$ with edge-length $1/2$, disjoint interiors,
each containing a different corner of $\unitcube$
(e.g., one such hypercube is $[0, 1/2]^p$).
Then the $R_{i}$ are disjoint $\ell_{\infty}$-balls
of radius $1/4$.
Because $X$ contains only $1154$ points,
the vast majority of $\{R_{i}\}_{i=1}^{2^p}$ do not contain any element of $X$.
Because $\E_{\hat{K}}(w)$ tends to increase with distance from
points in $X$, these unoccupied hypercubes are good regions to
look for points with large values of $\E_{\hat{K}}(w)$.
Within an unoccupied hypercube $R_{i}$, no point is farther in $\ell_{\infty}$
from any point in $X$ than the corner $r_{i}$.
So, the corners $\{r_{i}\}_{1}^{2^p}$
are good places to observe $\E_{\hat{K}}(w)$ to
find a tight lower bound on $\E_{\hat{K}}$.

For the CAM dataset, one corner $r_j$ attains
$\E_{\hat{K}}(r_j)=20.95$.
Since this is also the numerical upper bound, $\E_{\hat{K}}=20.95$.


Because $\E_{\hat{K}} \ge \hat{K}/2 = 17.34$,
theorem~\ref{thm:main} says that $ \E_K( \hat{f})\ge K/2$ for
any emulator $\hat{f}$.
In other words, by the discussion in
section \ref{sub:one-observation}, our maximum  uncertainty would have
been no greater had we just observed $f$ once, at $z$, and predicted
$\hat{f}(w)=f(z)$ for all $w \in \unitcube$.

In some sense, this result is not surprising: if we had fixed $\hat{K}$
but replaced $f$ with a constant function, and $\#X<2^{p}$, then
$\E_{\hat{K}}\ge \hat{K}/2$, with equality holding
if and only if $z \in X$. By repeating the bounding procedures from
the previous two sections with $\hat{K}/2=17.34$ fixed but
$f$ replaced with constant function $c$, we find $\E_{c,X,\hat{K}}=26.95$.
The increase in maximum  uncertainty from 20.95 to 26.95 that results from
replacing $f$ with a constant shows that the observed variation in
$f$ reduces the maximum  uncertainty considerably---although
the maximum  uncertainty remains quite large.

To connect these theoretical results to common emulation methods, 
we fit a Gaussian process model~\cite{Dancik2013} and 
Multivariate Adaptive Regression Splines (MARS)~\cite{Hastie2013} 
to the 110~CAM observations from a Latin hypercube design, leaving 1043 observations
for testing.
On the test set, the mean error of the Gaussian process 
model is $1.03$ ($3$\% of $\hat{K}$) and the maximum error is $6.73$ ($20$\% of $\hat{K}$). 
For MARS, the mean error on the test set is $1.59$ and maximum error is $6.21$. 
Since the 1043~test points are all distant from many corners of $\unitcube$,
the error of these methods over $\unitcube$ might be far larger; 
it would take many more evaluations of $f$ to tell.
Absent such data, there is no evidence that those methods have maximum error less than
$\E_{\hat{K}} = 20.95$.

\section{Extensions} \label{sec:extensions}
\subsection{Distribution of the uncertainty}
By drawing independent points 
$W \sim \mbox{Uniform}(\unitcube)$ and 
evaluating $\E_{\hat{K}}(W)$,
we construct lower confidence bounds for quantiles of the uncertainty and
the mean uncertainty over $\unitcube$.
Table~\ref{tab:integrals} shows the results for the CAM simulations based on 10,000
random samples from $\unitcube$.
Even the lower quartiles are a large fraction of $\hat{K}$.
For instance, at confidence level 95\%, the uncertainty under the
sup-norm metric exceeds 71.7\% of $\hat{K}/2$ on at least 50\% of the domain.

\begin{table}
\centering
	\caption{ \protect{\label{tab:integrals}}
	Confidence bounds for quantiles and the mean of the uncertainty of the minimax emulator 
	$\hat{f}_{\hat{K}}$ for CAM}
	\begin{tabular}{ll|rrr|r}
		& & \multicolumn{4}{c}{95\% lower confidence bound}  \\
		\cline{3-6} norm & units & lower quartile & median & upper quartile & average   \\
		\hline
		Euclidean & $\hat{K} / 2$ & 1.462 & 1.599 & 1.732 & 1.599 \\
		supremum & $\hat{K} / 2$ & 0.648 & 0.716 & 0.781 & 0.715 \\
		Euclidean & $\hat{\gamma}$ & 0.044 & 0.049 & 0.053 & 0.049 \\
		supremum & $\hat{\gamma}$ & 0.048 & 0.053 & 0.058 & 0.053
	\end{tabular}

\vspace{15px}
	\parbox[left]{30em}{
		\footnotesize
	   Column~1: distance metric $d$ used for the Lipschitz constant.
	   Columns~3--5: binomial lower 95\% confidence bounds for quartiles of the  uncertainty,
	   obtained by inverting binomial tests.
	   Column~6: 95\% lower 95\% confidence bound for the integral of the  uncertainty
	   over the entire domain $\unitcube$, based on inverting $z$-tests.
	   Columns 3--6 are expressed as a fraction of the quantity in column~2.
	   Results are based on 10,000 uniform random samples from $\unitcube$.
}

\end{table}

\subsection{Uncertainty relative to typical values}

We have focused on taking $\epsilon$ to be a fraction of $K$ or $\hat{K}$.
When $\epsilon$ is chosen that way, sections \ref{sec:Bounds-on-num-evals}
and \ref{sec:Bounds-on-error}  establish
conditions under which no emulator can be guaranteed to
replicate the variation of $f$.
Emulators are generally constructed to capture the complexity of the model: tracking
its variability.
That suggests approximating $f$ to within a fraction of its variation, which is why
we have calibrated $\epsilon$ to $\hat{K}$.
If the goal were to approximate $f$ to within a fraction of its mean,
and its mean is large compared to its variation,
approximating $f$ globally by its sample mean might suffice.
Then it might make sense to set
$\epsilon$ to be a fraction of a typical value of $f$, for instance, $\bar{\gamma}$ or
the sample mean
\[
	\hat{\gamma} = \frac{1}{\#X} \sum_{x \in X} f(x).
\]
The last 2 rows of Table~\ref{tab:integrals} list confidence bounds for percentiles
of the  uncertainty as a fraction of $\hat{\gamma}$.

Similarly, for $\epsilon$ chosen suitably, inequality~(\ref{eq:M_lower_bound})
gives a lower bound on $\M$ for approximating $f$ within a fraction of its
typical value, rather than within a fraction of its observed variation.
(Of course, the resulting bounds can be made arbitrarily small by
adding a sufficiently
large constant to $f$.
One reason we think it is more interesting to
calibrate $\epsilon$ as a fraction of $K$ or $\hat{K}$ is that
the results are invariant under affine transformations of $f$.)

For the CAM model, this lower bound on $\M$
is trivial when $\epsilon$ is a large fraction of the typical value of $f$, but grows
rapidly as the fraction decreases (table~\ref{tab:typical_f_burden}).

\begin{table}
\centering
	\caption{ \protect{\label{tab:typical_f_burden}}
	Minimum computational burden for the CAM model.}
	\begin{tabular}{llr}
		norm & $\epsilon/\hat{\gamma}$ & lower bound on $\M$ \\
		\hline
		Euclidean &	$0.02 $ & $3.6 \times 10^{12}$ \\
		&	$0.04$ & 1,720,354 \\
		&	$0.06$ & $345$ \\
		&	$0.08$ & $1$ \\
		\hline
		supremum &	$0.02$ & $8.6 \times 10^{10}$ \\
		&	$0.04$ & 413,595 \\
		&	$0.06$ & $83$ \\
		&	$0.08$ & $1$
	\end{tabular}
\end{table}

\subsection{Other uses for $e_{\kappa}^-$ and $e_{\kappa}^+$}
We have primarily used $e_{\kappa}^+$ and $e_{\kappa}^-$ to construct the minimax
emulator and find its uncertainty.
But if $f$ is no less regular than it was observed to be,
$e_{\hat K}^+$  is a pointwise upper bound on $f$
and $e_{\hat K}^-$ is a pointwise lower bound on $f$.
Moreover,  if $f$ is no less regular than the data require it to be,
$\max_{w \in \unitcube} e_{\hat K}^+(w)$
is a global upper bound on $f$ and
$\min_{w \in \unitcube} e_{\hat K}^-(w)$
is a global lower bound on $f$.

Maximizing $e_{\hat K}^+$ or minimizing  $e_{\hat K}^-$ exactly
may not be tractable.
For sup-norm, we can use the techniques from section \ref{sub:climate_app_1} to bound
these extrema from above and below:
for the CAM model, those upper and lower bounds on
$e_{\hat K}^+$ are equal, as they are for  $e_{\hat K}^-$.
The maximum of
$e_{\hat K}^+$ is $253.78$ and the minimum of $e_{\hat K}^-$ is $211.88$.


\section{Conclusions} \label{sec:conclusions}
We find a lower bound on the minimum (over emulators) maximum (over functions
that agree with the data and are as regular as the data allow) error of emulators of a function $f$
based on $n$~observations.
This ``mini-minimax'' uncertainty is optimistic because it assumes that $f$
has the smallest Lipschitz constant consistent with the data.
The mini-minimax uncertainty is an attainable bound on the error of the
best emulator of $f$ at $w$:  for any emulator $\hat{f}$, there is a function
$g$ that is at least as regular as $f$,
that agrees with $f$ at the $n$~observations, and for which
$|\hat{f}(w) - g(w)|$
is at least this mini-minimax value.

In some problems, \emph{every} emulator
based on any tractable number of observations of $f$
has large maximum uncertainty (and the  
uncertainty is large over much of the domain),
even if $f$ is as regular as the data allow.
That is, there are functions $g$ and $h$ that agree perfectly
with the observations, are as regular as the observations permit,
and yet differ by a large amount at some point in the domain of $f$.

We give sufficient conditions under which even the best possible emulator
has large uncertainty.
The conditions depend only on the observed values of $f$; they can be computed
from the same observations used to train an emulator, at a cost that typically is small
compared with the cost of generating those observations.
The conditions are sufficient but not necessary, because $f$ could be
less regular than any finite set of observations reveals it to be.
It is not possible to give necessary conditions that depend only on the observed
values of $f$; a priori bounds on the regularity of $f$ would be needed.

The conditions seem likely to hold for many high-consequence applications.
Indeed, we show quantitatively that the conditions hold for a large climate-modeling 
dataset.
When the maximum uncertainty in approximating $f$ everywhere
by a constant---the value of
$f$ at the center of the domain---is no larger than the maximum uncertainty
in approximating $f$ from any tractable number of observations,
emulators may not be useful.
No emulator can then
reliably model $f$ as a function of its input $w \in \unitcube$.

Common techniques for assessing the accuracy of emulators (e.g., posterior variance or
performance on hold-out data)
understate the true uncertainty,
because they make strong assumptions about $f$ that are
based neither on the observations nor on known
properties of $f$, or because they focus on average error rather than worst-case error.
However, as section~\ref{sec:extensions} shows, even the average uncertainty
and quartiles of the uncertainty for the CAM model are quite large.

The mini-minimax uncertainty is a one-sided tool: if this uncertainty is large,
the data do not constrain $f$ well, while if it is small, the data constrain
$f$ \emph{only} if it is no less regular than the data collected so far show it must be.
That said, if the mini-minimax uncertainty is uncomfortably large, there might be ways
to reduce it.
For instance, if the lower bound (\ref{eq:M_lower_bound}) on the computation burden
required to reduce the uncertainty to a useful level $\epsilon$
is affordable, one might collect more data.
Provided the new data do not increase $\hat{K}$ substantially, the mini-minimax uncertainty
can be reduced at will.
But when $p$ is large, the lower bound is likely to be large, because it grows exponentially
with $p$.
If observing $f$ requires a real-world experiment, new technology might be required to make
a useful number of additional observations affordable.
When observing $f$ involves running a simulator, collecting enough additional
data to reduce the uncertainty to a reassuring value might require
not only recruiting additional computational resources but also 
reducing the computational cost of each simulation---substantially.

In some cases, clever strategies can reduce the cost of computing $f$, at least to some known
degree of approximation,
but that is not always so.
Cost reductions of orders of magnitude might require 
reducing the complexity of $f$. 
Reducing the dimension $p$ of the domain of $f$ is especially helpful, because reducing $p$ 
pays exponential
dividends.
But it requires scientific justification: In general, eliminating parameters from a model
entails bias in the model with no \emph{a priori} limit.
It is hard to calibrate the tradeoff between fitting a model that is constrained by the data but
is known or suspected to be overly simplistic---and 
therefore biased---and a model that has lower bias but cannot be estimated reliably from an 
affordable number of data.
Subject-matter knowledge is key.


Without increasing the number of observations or revising the model,
reducing the uncertainty of emulators
requires either more information about $f$\footnote{%
	Common additional conditions include the following: parameters have
	only low-order interactions; the second derivative has an upper bound;
	the third derivative has a limited number of knots; the integral of
	the squared derivative of the model is bounded~\cite{Lamboni2012}.
	There are problems in which conditions like these may reflect
	actual knowledge about $f$.
	However, such conditions tend to be difficult to verify:
	simulation is perhaps most valuable when the underlying equations
	are not amenable to mathematical analysis.
}
or changing the measure of uncertainty---changing the scientific
question.
Finally, approximating $f$ pointwise is not usually the ultimate scientific goal.
More important questions about $f$ might be answered more
directly.\footnote{%
	For example, for global optimization---finding maxima or
	minima---a form of adaptive sampling known as multi-start methods
	yields good results~\cite{Hickernell1997}.
}
These tactics are application-specific: the
underlying science dictates the conditions that actually hold
for $f$ and the questions about $f$ that matter.
\vspace{28em}
\pagebreak{}

\appendix
\section{\label{sec:proofs}Proofs}

\noindent
For real $\chi$ and $\rho$, define the interval
\[
	I(\chi,\rho)\equiv
	\begin{cases}
		\left[\chi-\rho,\chi+\rho\right], & \rho\ge0\\
		\emptyset, & \text{otherwise.}
	\end{cases}
\]
If $I$ is an interval, $\mu(I)$ denotes its length; for instance,
$\mu(I(\chi, \rho)) = \max(0, 2\rho)$.

\begin{restatable}{lem}{intervallem}
\label{lem:interval}
\textup{Fix $\alpha \in[0,1]$, $\rho_{1},\ldots,\rho_{n}\in[0,\infty)$
and $\chi_{1},\ldots,\chi_{n}\in\mathbb{R}$. Let $I_{1}\equiv\bigcap_{i=1}^{n}I(\chi_{i},\rho_{i})$
and $I_{ \alpha }\equiv\bigcap_{i=1}^{n}I(\chi_{i}, \alpha  \rho_{i})$. Then
$ \alpha \mu\left(I_{1}\right)\ge\mu\left(I_{ \alpha }\right)$.}
\end{restatable}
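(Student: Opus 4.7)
The plan is to rewrite each intersection as a closed interval determined by two extreme endpoints, so that the claim becomes a scalar inequality. For each $\beta \ge 0$, $\bigcap_{i=1}^n I(\chi_i,\beta\rho_i)$ is either empty or equals $[B_\beta,A_\beta]$, where
\[
A_\beta \equiv \min_i(\chi_i + \beta\rho_i), \qquad B_\beta \equiv \max_j (\chi_j - \beta\rho_j),
\]
so $\mu(I_\beta) = (A_\beta - B_\beta)^+$. Specializing to $\beta=1$ and $\beta=\alpha$ reduces the claim to $\alpha(A_1-B_1)^+ \ge (A_\alpha-B_\alpha)^+$.

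Next, I would fix indices $i_1 \in \argmin_i(\chi_i + \rho_i)$ and $j_1 \in \argmax_j(\chi_j - \rho_j)$ attaining $A_1$ and $B_1$. Evaluating each extremum at the ``other'' index gives
\[
\chi_{i_1} + \rho_{i_1} \le \chi_{j_1} + \rho_{j_1}, \qquad \chi_{i_1} - \rho_{i_1} \le \chi_{j_1} - \rho_{j_1}.
\]
Adding and dividing by two yields the crucial midpoint inequality $\chi_{i_1} \le \chi_{j_1}$. This is the heart of the argument because it pins down the sign of the slack in the main estimate; without it the bound would go the wrong way.

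With this observation in hand, plugging $i=i_1$ into the definition of $A_\alpha$ and $j=j_1$ into that of $B_\alpha$ gives $A_\alpha \le \chi_{i_1} + \alpha\rho_{i_1}$ and $B_\alpha \ge \chi_{j_1} - \alpha\rho_{j_1}$, so
\[
A_\alpha - B_\alpha \le (\chi_{i_1} - \chi_{j_1}) + \alpha(\rho_{i_1} + \rho_{j_1}),
\]
while $\alpha(A_1-B_1) = \alpha(\chi_{i_1} - \chi_{j_1}) + \alpha(\rho_{i_1} + \rho_{j_1})$. Subtracting,
\[
\alpha(A_1 - B_1) - (A_\alpha - B_\alpha) \ge (\alpha - 1)(\chi_{i_1} - \chi_{j_1}) \ge 0,
\]
since $\alpha \le 1$ and $\chi_{i_1} \le \chi_{j_1}$. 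Taking positive parts handles the two cases uniformly: when $A_1 \ge B_1$ the inequality $\alpha(A_1-B_1) \ge A_\alpha - B_\alpha$ survives the truncation, and when $A_1 < B_1$ (so $\mu(I_1)=0$) the same inequality forces $A_\alpha - B_\alpha \le \alpha(A_1-B_1) < 0$, whence $\mu(I_\alpha)=0$ as well. I expect the only non-routine step to be the midpoint inequality $\chi_{i_1} \le \chi_{j_1}$; everything else is bookkeeping.
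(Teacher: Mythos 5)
Your proof is correct, but its mechanism differs from the paper's. The paper represents $I_\alpha$ by its center and radius, $I_\alpha = I(\chi_0,\rho_0)$, and shows that the dilation $I\left(\chi_0,\rho_0/\alpha\right)$ of $I_\alpha$ about its own center is contained in every $I(\chi_i,\rho_i)$, hence in $I_1$; the inequality $\mu(I_1)\ge\mu(I_\alpha)/\alpha$ then follows from monotonicity of length under inclusion. You instead work with the endpoints $A_\beta=\min_i(\chi_i+\beta\rho_i)$ and $B_\beta=\max_j(\chi_j-\beta\rho_j)$, reduce to the two intervals attaining the scale-$1$ extrema, and drive everything off the midpoint inequality $\chi_{i_1}\le\chi_{j_1}$; in effect you reduce the $n$-interval statement to a two-interval computation, which the paper's containment argument does not do. Each route buys something: the dilation argument is more geometric and would transfer verbatim to intersections of balls in other metric settings, while your endpoint algebra, by phrasing the claim as $\alpha(A_1-B_1)^+\ge(A_\alpha-B_\alpha)^+$, handles the degenerate cases explicitly --- the paper's proof tacitly assumes $I_\alpha\neq\emptyset$ and divides by $\alpha$, so it silently excludes the (trivial) cases $I_\alpha=\emptyset$ and $\alpha=0$, whereas your positive-part bookkeeping covers them, up to the cosmetic point that when $\alpha=0$ your ``$\alpha(A_1-B_1)<0$'' should read ``$\le 0$,'' which still yields $\mu(I_\alpha)=0$.
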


\begin{proof}
Because the intersection of intervals is itself an interval, there
exist $\chi_0$ and $\rho_0$ satisfying
\[
I_{ \alpha }=I(\chi_0, \rho_0).
\]
Fix $i\in{1,\ldots,n}$. Then
\[
I(\chi_0, \rho_0)\subset I(\chi_{i}, \alpha  \rho_{i}).
\]
It follows that
\[
\chi_0 - \rho_0 \ge \chi_i - \alpha  \rho_i.
\]
Then
\[
 \alpha \left(\rho_{i}-\frac{\rho_0}{ \alpha }\right)\ge \chi_{i}-\chi_0.
\]
Because $ \alpha \le 1$ and $\rho_{i} \ge 0$,
\[
\rho_{i}-\frac{\rho_0}{ \alpha }\ge \chi_{i}-\chi_0.
\]
Finally,
\[
\chi_0-\frac{\rho_0}{ \alpha }\ge \chi_{i}-\rho_{i}.
\]
By symmetric reasoning we also have
\[
\chi_0 + \frac{\rho_0}{ \alpha }\le \chi_{i}+\rho_{i}.
\]
Therefore,
\[
I\left(\chi_0, \frac{\rho_0}{ \alpha }\right)\subset I(\chi_{i},\rho_{i}).
\]
Because $i$ was arbitrary,
\[
I\left(\chi_0,\frac{\rho_0}{ \alpha }\right)\subset I_{1}.
\]
Hence,
\[
\mu \left ( I_1 \right ) \ge \mu\left(I\left(\chi_0, \frac{\rho_0}{ \alpha }\right)\right)=\frac{2\rho_0}{ \alpha }
   = \frac{\mu \left ( I_\alpha \right )}{\alpha}.
\]
\end{proof}
Lemma~\ref{lem:interval} is used in the proof of Theorem~\ref{thm:main}, below.
\vspace{20px}

\minimaxEquivProp*
\begin{proof}
${}$\\[1ex]
\noindent
{\emph{Step 1: $e_{\kappa}^{+}$ and $e_{\kappa}^{-}$ are Lipschitz continuous with
               constant $\kappa$.}} \\
	For $v, w \in \unitcube$, $\exists x, y \in X$ satisfying
	\[
	     e_{\kappa}^{+}(v)=f(x)+ \kappa d(x,v) \mbox{ and } e_{\kappa}^{+}(w)=f(y)+
	     \kappa d(y,w).
	\]
	Suppose without loss of generality that $e_{\kappa}^+(v) \ge e_{\kappa}^+(w)$.
	By construction, $e_{\kappa}^{+}(v) \le f(y)+ \kappa d(y,v)$.
	Hence
	\begin{eqnarray}
	   0 \le e_{\kappa}^+(v) - e_{\kappa}^+(w)
	         &\le & f(y) + \kappa d(y, v) - e_{\kappa}^+(w) \nonumber \\
	         &= & f(y) + \kappa d(y, v) - f(y) - \kappa d(y,w) \nonumber \\
	         &\le & \kappa (d(y, v)-d(y,w)) \nonumber \\
	         &\le & \kappa d(v, y), \nonumber
	\end{eqnarray}
	by the triangle inequality.
	Hence $e_{\kappa}^+$ has Lipschitz constant $\kappa$.
	An analogous argument shows that $e_{\kappa}^-$ also has Lipschitz constant $\kappa$.

\vspace{1ex}
\noindent
{\emph{Step 2: $e_{\kappa}^{+}$ and $e_{\kappa}^{-}$
       agree with $f$ on $X$.}} 
       (Hence, $\mmfk = (e_{\kappa}^{+} + e_{\kappa}^{-})/2$ agrees with $f$ on $X$.)\\
	We have
	$$\kappa \ge \hat{K} \equiv \max_{x, y \in X: x \ne y} \frac{|f(x) - f(y)|}{d(x, y)},$$
	and hence $|f(x) - f(y)| \le \kappa d(x,y)$ for all $x, y, \in X$.
	Thus
	$$\min_{x \in X} [f(x) + \kappa d(x, y)] = \min \left \{ f(y), \min_{x \in X, x \ne y} [f(x) + \kappa d(x, y)] \right \}
	    = f(y).
	$$
	Similarly, $\max_{x \in X} [f(x) - \kappa d(x, y)] = f(y)$ for $y \in X$.
	Hence, $e_{\kappa}^{+}(y) = e_{\kappa}^{-}(y) = f(y)$ for $y \in X$.
	Since, as shown in step~1, $e_{\kappa}^{+}$ and $e_{\kappa}^{-}$ are
	Lipschitz with constant $\kappa$,
	$e_{\kappa}^{+}$ and $e_{\kappa}^{-} \in \F_{\kappa}$.


\vspace{1ex}
\noindent
{\emph{Step 3: $e_{\kappa}^-$ is the pointwise infimum of $\F_{\kappa}$ and
    $e_{\kappa}^+$ is the pointwise supremum of $\F_{\kappa}$.}} \\
	Suppose to the contrary that there exists
	$w\in\unitcube$, $x \in X$, and $g\in\F_{\kappa}$ for which
	\[
         	g(w)>f(x)+\kappa d(x,w).
	\]
	Recall that  $g \in \F_{\kappa}$ implies that $g(x) = f(x)$ $\forall x \in X$.
	Hence
        \[
              g(w)-g(x) >  f(x) + \kappa d(x,w) - f(x)  = \kappa d(x, w).
        \]
        That is, $g$ has a Lipschitz constant greater than $\kappa$, a contradiction.
	Hence, $e_{\kappa}^+(w) = \sup\{g(w): g \in \F_{\kappa}\}$ for all $w \in \unitcube$.
	The same argument, \emph{mutatis mutandi}, shows that
	$$e_{\kappa}^-(w) = \inf\{g(w): g \in \F_{\kappa}\} \mbox{ for all } w \in \unitcube.$$

\noindent
{\emph{Step 4: The maximum uncertainty of $\mmfk$ at $w$, $\E_\kappa (w; \mmfk)$, equals
                        $[e_\kappa^+(w) - e_\kappa^-(w)]/2$.
}}
\begin{align}
	\E_\kappa (w; \mmfk) & \equiv \sup_{g \in \F_\kappa(w)} \left| \mmfk(w) - g(w) \right| \nonumber \\
	& = \max \left \{ \sup_{g \in \F_\kappa(w)} g(w) - \mmfk(w),  \mmfk(w) -
	      \inf_{g \in \F_\kappa(w)} g(w) \right \}
	   \nonumber \\
	& =  \max \left \{ e_{\kappa}^+(w) - \mmfk(w), \mmfk(w) - e_{\kappa}^-(w) \right \}  \label{eq:sup_to_e} \\
	& =  \max \left \{ e_{\kappa}^+(w) - \frac{e_{\kappa}^+(w)+e_{\kappa}^-(w)}{2}, \frac{e_{\kappa}^+(w)+e_{\kappa}^-(w)}{2} - e_{\kappa}^-(w) \right \} \nonumber \\
	& =  \frac{e_{\kappa}^+(w) - e_{\kappa}^-(w)}{2}. \nonumber
\end{align}
Equality (\ref{eq:sup_to_e}) follows from step~3.

\vspace{1ex}
\noindent
{\emph{Step 5: The minimax uncertainty at $w$, $\E_\kappa (w)$, equals
                        $[e_\kappa^+(w) - e_\kappa^-(w)]/2$.}}

\noindent
Suppose $\hat{f}(w) > \mmfk(w)$.
Then $$|\hat{f}(w) - e_{\kappa}^-(w)| > \frac{e_\kappa^+(w) - e_\kappa^-(w)}{2} = \E_\kappa (w; \mmfk).$$
Suppose $\hat{f}(w) < \mmfk(w)$.
Then $$|\hat{f}(w) - e_{\kappa}^+(w)| > \frac{e_\kappa^+(w) - e_\kappa^-(w)}{2} = \E_\kappa (w; \mmfk). $$
Hence, $\mmfk(w)$ is minimax, and $\E_\kappa (w) = \E_\kappa(w, \mmfk) = [e_\kappa^+(w) - e_\kappa^-(w)]/2$.

\end{proof}

\begin{restatable}{lem}{uncertainty_ball_lem}
\label{lem:uncertainty_ball}
For $\kappa \ge 0$,
  $$\E_{\kappa}(w) = \frac{1}{2}\mu\left(\bigcap_{x\in X}I\left(f(x),\kappa d(x,w)\right)\right).$$
\end{restatable}
\begin{proof}
\begin{eqnarray}
	\E_{\kappa}(w) & = & \frac{1}{2}\left[e_{\kappa}^{+}(w)-e_{\kappa}^{-}(w)\right]
	                               \nonumber \\
	 & = & \frac{1}{2}\left\{ \min_{x\in X}\left\{ f(x)+\kappa d(x,w)\right\} -
	           \max_{x\in X}\left\{ f(x)-\kappa d(x,w)\right\} \right\} \nonumber \\
	 & = & \frac{1}{2}\mu\left(\left[\max_{x\in X}\left\{ f(x)-\kappa d(x,w)\right\} ,
	      \min_{x\in X}\left\{ f(x)+\kappa d(x,w)\right\} \right]\right)\nonumber \\
	 & = & \frac{1}{2}\mu\left(\bigcap_{x\in X}I\left(f(x),\kappa d(x,w)\right)\right).\nonumber
\end{eqnarray}
The first equality follows from proposition~\ref{prop:supequiv}.
\end{proof}

\thmmain*
\begin{proof}
Let $w^{\star}\equiv\arg\max_{w}\E_{\hat{K}}(w)$.
Then
\begin{eqnarray}
	 \E_K( \hat{f})
	         & \ge &  \E_K( \mmfK)\nonumber \\
		 & = & \E_{K}(w^{\star}) \nonumber \\
		 & \ge & \frac{K}{\hat{K}}\cdot \E_{\hat{K}}(w^{\star})\label{eq:thm_formerly_2}\\
		 & \ge & \frac{K}{\hat{K}}\cdot  \lambda \hat{K}\label{eq:thm_formerly_3}\\
		 & = &  \lambda  K.\nonumber
\end{eqnarray}
Inequality (\ref{eq:thm_formerly_3}) follows from (\ref{eq:thm_formerly_2}) by hypothesis.
Inequality (\ref{eq:thm_formerly_2}) is a consequence of lemma~\ref{lem:interval}:
Let $\alpha=\hat{K}/K \le 1$.
For, $i=1,\ldots,\#X$, let $\rho_{i}=f(x_{i})$ and $\chi_{i}=Kd(x,w)$.
Then, by lemma~\ref{lem:uncertainty_ball}, $\mu(I_{1})/2=\E_{K}$
and $\mu(I_{\alpha})/2=\E_{\hat{K}}$.
\end{proof}

\climateubprop*
\begin{proof}
	Fix $w\in\unitcube$.
	Let $w_{(i)}$ denote the $i^{\text{th}}$ component of $w$.
	Then
	\begin{align*}
		d(v,w) & =\max_{i\in\left\{ 1,\ldots,p\right\} }\left | v_{(i)}-w_{(i)} \right | \\
		 & \le\max_{i\in\left\{ 1,\ldots,p\right\} }\max_{\delta\in\{0,1\}}\left|v_{(i)}-\delta\right|\\
		 & =\max_{i\in\left\{ 1,\ldots,p\right\} }\max_{y\in\{ \cornerA , \cornerB \}}
		            \left| v_{(i)}-y_{(i)}\right|\\
		 & =\max_{y\in\{ \cornerA , \cornerB \}}\max_{i\in\left\{ 1,\ldots,p\right\} }
		             \left|v_{(i)}-y_{(i)}\right|\\
		 & =\max_{y\in\{ \cornerA , \cornerB \}}d(v,y)\\
		 & =\max(d(v, \cornerA ),d(v, \cornerB )).
	\end{align*}
	Hence,
	\begin{align}
	       \E_{\hat{K}}(w)
	            =&
	      \frac{1}{2}\mu\left(\bigcap_{x\in X}I\left(f(x),\hat{K}d(x,w)\right)\right)
	               \label{eq:prop_by_lemma6}\\
	      \le & \frac{1}{2}\mu\left(\bigcap_{x\in X}I\left(f(x),\hat{K}\tilde{d}(x)\right)\right)
	               \label{eq:prop_by_choice_of_d}\\
	       = & \frac{1}{2}
	             \left\{ \min_{x\in X}
	                   \left[f(x)+\hat{K}\tilde{d}(x)\right]-
	                       \max_{x\in X}\left[f(x)-\hat{K}\tilde{d}(x)
	                    \right]
	             \right\}  \nonumber
	\end{align}
	where (\ref{eq:prop_by_lemma6}) follows from lemma~\ref{lem:uncertainty_ball}.
	Because the right-hand side of this inequality does not depend on
	$w$, the proposition follows by taking suprema.
\end{proof}

\section{\label{sec:borehole_K} The Lipschitz constant $K$ for the Borehole function}
The Borehole function is
\[
	f_0(H_u, H_\ell, T_u, T_\ell, r, r_w, L, K_w)
		= \frac{2\pi T_{u}\left(H_{u}-H_{l}\right)}
		{\log\left(r/r_{w} \right) \left(1+\frac{2LT_{u}}{\log\left(r/r_{w}\right)r_{w}^{2}K_{w}}+\frac{T_{u}}{T_{\ell}}\right)}.
\]
The variables are restricted to the ranges in table~\ref{tab:borehole_domain}.
We rescale $f_0$ so that its inputs range over the 8-dimensional unit hypercube $[0,1]^8$;
the resulting function is denoted $f$.

In $\ell_\infty$, because $f$ is differentiable and $\dom(f)$ is convex,
\begin{align*}
	K &= \sup_{w \in \dom(f)} \|Df(w)\|_\infty
      = \sup_{w \in \dom(f)} \sum_{i=1}^{8}\left|\frac{\partial}{\partial w_{i}}f(w)\right|.
\end{align*}
Let
\[
	H = 2\pi(H_u - H_\ell),
\]
\[
	R = \log(r / r_w),
\]
\[
	M = 2 L / K_w,
\]
\[
	t = T_\ell^{-1} + T_u^{-1}
\]
and
\[
	S = M + Rr_{w}^{2}t.
\]
Now
\begin{align*}
	f_0(H_u, H_\ell, T_u, T_\ell, r, r_w, L, K_w) & =\frac{Hr_{w}^{2}}{S}.
\end{align*}
We bound each partial derivative of $f$ using the ranges of the input variables:
\begin{align*}
	\left|\frac{\partial f_0}{\partial H_{\ell}}\right|
	= \left|\frac{\partial f_0}{\partial H_{u}}\right|
	=\frac{2\pi r_{w}^{2}}{S}
	\le 0.76 & \,\,\Longrightarrow \,\, 
	\left|\frac{\partial f}{\partial H_{\ell}}\right| =
	\left|\frac{\partial f}{\partial H_{u}}\right| \le 91.2 \\
\left|\frac{\partial f_0}{\partial T_{u}}\right|
	=\frac{HRr_{w}^{4}}{S^{2}T_u^2}
	\le 0.01 & \,\,\Longrightarrow \,\, \left|\frac{\partial f}{\partial T_{u}}\right| \le 0.01 \\
	\left|\frac{\partial f_0}{\partial T_{l}}\right|
	=\frac{HRr_{w}^{4}}{S^{2}T_\ell^2}
	\le 0.13 & \,\,\Longrightarrow \,\, \left|\frac{\partial f}{\partial T_{l}}\right| \le 6.8 \\
\left|\frac{\partial f_0}{\partial r}\right|
	= \frac{Hr_{w}^{4}t}{S^{2}r}
	\le 0.01 & \,\,\Longrightarrow \,\, \left|\frac{\partial f}{\partial r}\right| \le 290.8 \\
\left|\frac{\partial f_0}{\partial r_w}\right|
	=\frac{Hr_w^3t}{S^{2}} + \frac{2H}{S(1/r_w+Rr_wt/M)}
	\le 4050.2 & \,\,\Longrightarrow \,\, \left|\frac{\partial f}{\partial r_w}\right| \le 405.0 \\
\left|\frac{\partial f_0}{\partial L}\right|
	=\frac{2Hr_{w}^{2}}{S^{2}K_w}
	\le 0.34 & \,\,\Longrightarrow \,\, \left|\frac{\partial f}{\partial L}\right| \le 190.4 \\
\left|\frac{\partial f_0}{\partial K_w}\right|
	=\frac{2LHr_{w}^{2}}{S^{2}K_w^2}
	\le 0.06 & \,\,\Longrightarrow \,\, \left|\frac{\partial f}{\partial K_w}\right| \le 123.7.
\end{align*}
Summing these upper bounds for the partial derivatives of $f$ yields
\[
	\sup_{w\in\dom(f)} \|Df(w)\|_\infty < 1200.
\]
Moreover, for $w_0 = (1100, 700, 115547, 116, 100, 0.15, 1120, 12045)$,
\[
	\|Df(w_0)\|_\infty = 944.
\]
Hence, for the rescaled borehole function $f$,
\[
	944 \le K \le 1200.
\]

\vspace{10px}
\subsection*{Acknowledgments}
We thank the Associate Editor and both Referees for their helpful comments about this paper.

\newpage
\bibliographystyle{unsrt}
\bibliography{references}

\begin{thebibliography}{10}

\bibitem{Sacks1989}
J~Sacks, WJ~Welch, TJ~Mitchell, and HP~Wynn.
\newblock {Design and Analysis of Computer Experiments}.
\newblock {\em Statistical Science}, 1989.

\bibitem{Ben-Ari2007}
EN~Ben-Ari and DM~Steinberg.
\newblock {Modeling data from computer experiments: An empirical comparison of
  Kriging with MARS and projection pursuit regression}.
\newblock {\em Quality Engineering}, 2007.

\bibitem{Ghanem2008}
RG~Ghanem, A~Doostan, and J~Red-Horse.
\newblock {A probabilistic construction of model validation}.
\newblock {\em Computer Methods in Applied Mechanics and Engineering}, 2008.

\bibitem{Shan2009}
S~Shan and GG~Wang.
\newblock {Survey of modeling and optimization strategies to solve
  high-dimensional design problems with computationally-expensive black-box
  functions}.
\newblock {\em Structural and Multidisciplinary Optimization}, 2009.

\bibitem{Covey2011}
C~Covey, S~Brandon, PT~Bremer, D~Domyancis, X~Garaizar, G~Johannesson, R~Klein,
  SA~Klein, DD~Lucas, J~Tannahill, and Y~Zhang.
\newblock {A new ensemble of perturbed-input-parameter simulations by the
  Community Atmosphere Model}.
\newblock Technical report, Lawrence Livermore National Laboratory, 2011.

\bibitem{AspenbergneLonn2012}
D~Aspenberg, J~Jergeus, and L~Nilsson.
\newblock {Robust optimization of front members in a full frontal car impact}.
\newblock {\em Engineering Optimization}, 2012.

\bibitem{Holena2011}
M~Holena, D~Linke, and U~Rodemerck.
\newblock {Generator approach to evolutionary optimization of catalysts and its
  integration with surrogate modeling}.
\newblock {\em Catalysis Today}, 2011.

\bibitem{Shorter1999}
JA~Shorter, PC~Ip, and HA~Rabitz.
\newblock {An efficient chemical kinetics solver using high dimensional model
  representation}.
\newblock {\em The Journal of Physical Chemistry A}, 1999.

\bibitem{Srivastava2004}
A~Srivastava, K~Hacker, K~Lewis, and TW~Simpson.
\newblock {A method for using legacy data for metamodel-based design of
  large-scale systems}.
\newblock {\em Structural and Multidisciplinary Optimization}, 2004.

\bibitem{Koch1999}
PN~Koch, TW~Simpson, and JK~Allen.
\newblock {Statistical approximations for multidisciplinary design
  optimization: the problem of size}.
\newblock {\em Journal of Aircraft}, 1999.

\bibitem{Booker1999}
AJ~Booker, JE~Dennis, PD~Frank, DB~Serafini, V~Torczon, and Trosset MW.
\newblock {A rigorous framework for optimization of expensive functions by
  surrogates}.
\newblock {\em Optimization}, 1999.

\bibitem{Bates1996}
RA~Bates, RJ~Buck, E~Riccomagno, and HP~Wynn.
\newblock {Experimental design and observation for large systems}.
\newblock {\em Journal of the Royal Statistical Society, Series B}, 1996.

\bibitem{Packel1988}
EW~Packel.
\newblock {Do linear problems have linear optimal algorithms?}
\newblock {\em SIAM Review}, 1988.

\bibitem{Traub1980}
J~Traub and H~Wo\'{z}niakowski.
\newblock {\em {A general theory of optimal algorithms}}.
\newblock 1980.

\bibitem{Traub1988}
JF~Traub, GW~Wasilkowski, and H~Wo\'{z}niakowski.
\newblock {\em {Information-based complexity}}.
\newblock 1988.

\bibitem{Surjanovic2014}
S~Surjanovic and D~Bingham.
\newblock {Virtual library of simulation experiments: test functions and
  datasets}.
\newblock \url{http://www.sfu.ca/\~ssurjano/emulat.html}.
\newblock Online; accessed March 3, 2014.

\bibitem{Dancik2013}
GM~Dancik.
\newblock mlegp: Maximum likelihood estimates of gaussian processes.
\newblock R package version 3.1.4.
  \url{http://cran.r-project.org/package=mlegp}, 2013.

\bibitem{Backer2010}
J~Backer and JM~Keil.
\newblock {The mono- and bichromatic empty rectangle and square problems in all
  dimensions}.
\newblock In {\em LATIN 2010: Theoretical Informatics}, 2010.

\bibitem{Hastie2013}
T~Hastie and R~Tibshirani.
\newblock mda: Mixture and flexible discriminant analysis.
\newblock R pacakage version 0.4.4.
  \url{http://cran.r-project.org/package=mda}, 2013.

\bibitem{Lamboni2012}
M~Lamboni, B~Iooss, AL~Popelin, and F~Gamboa.
\newblock {Derivative-based global sensitivity measures: general links with
  Sobol' indices and numerical tests}.
\newblock {\em arXiv preprint}, 2012.

\bibitem{Hickernell1997}
FJ~Hickernell.
\newblock {A simple multistart algorithm for global optimization}.
\newblock {\em OR Transactions}, 1997.

\end{thebibliography}
\vfill

\end{document}